\author{M.\,Ziatdinov}
\title{Quantum hashing. Group approach}
\newcommand{\ZZ}{\mathbb{Z}}
\newcommand{\KK}{\mathbb{K}}
\newcommand{\ket}[1]{|#1\rangle}
\newcommand{\braket}[2]{\langle#1|#2\rangle}
\newcommand{\braaket}[3]{\langle#1|#2|#3\rangle}
\newcommand{\brz}[1]{\langle \psi_0 | #1 | \psi_0 \rangle}
\newcommand{\KG}{K_\mathrm{good}}
\newcommand{\Aut}{\mathrm{Aut}}
\newcommand{\HH}[1]{(\mathcal{H}^2)^{\otimes #1}}
\newcommand{\HtoH}[1]{[\HH{#1} \to \HH{#1}]}
\newcommand{\Mean}[2]{\mathbf{E}_{#1}\bigg[#2\bigg]}
\newcommand{\Prob}{\mathrm{P}}
\newtheorem{thm}{Theorem}
\newtheorem{lemma}{Lemma}
\newtheorem{defn}{Definition}
\begin{document}

\maketitle

\begin{abstract}
In this paper we consider a generalization of quantum hash
  functions for arbitrary groups. We show that quantum hash function
  exists for arbitrary abelian group. We construct a set of ``good''
  automorphisms --- a key component of quantum hash funciton. We prove
  some restrictions on Hilbert space dimension and group used in
  quantum hash function
\end{abstract}

\section{Introduction}
\label{sec:intro}

Buhrman et al.\ in~\cite{BCWdW01} introduced the notion of quantum fingerprinting and constructed first quantum hash
function. Ablayev and Vasiliev in~\cite{AV09} offered another version of quantum fingerprinting.

In~\cite{AV14} construction of Buhrman et al.\ and Ablayev-Vasiliev's construction are generalized. It is shown that
both approaches can be viewed as composition of ``quantum generator'' and (classical) universal hash function. Also the
notion of ``quantum hash function'' is introduced.

We present an algebraic generalization of Ablayev-Vasiliev's construction. Main reason of it is maximal abstraction
while retaining such properties of quantum hash function as simple evaluation, ability to continue computing hash (i.e.
hash value of string concatenation can be somehow evaluated based on hash of first string and second string), simple
reverse transform (in Ablayev-Vasiliev's construction it is enough to reverse input string and change the sign of
rotations). 


In section~\ref{sec:def} required definitions are introduced, sections~\ref{sec:zq}-\ref{sec:abelian-group} are devoted
to contruction of quantum hash function for arbitrary abelian group, section~\ref{sec:good-automorphisms} proves
existence of ``good'' automorphisms which are key component of quantum hash function, section~\ref{sec:restrictions} is
devoted to some restrictions on possible combinations of parameters of quantum hash function.

\section{Definition}
\label{sec:def}

We start with recalling basic definitions that we will need in the paper

Let $x$ be a $n$-bit message: $x \in \{0,1\}^n$. 

Let us consider functions mapping $\{0,1\}^n$ to some (arbitrary finite) group $G$ with group operation $\circ$ and unit
element $e$:
\[
h: \{0,1\}^n \to G
\]

\smallskip

Let us choose a homomorphism \( f: G \to \HtoH{m} \), i.e. function that preserves group structure:
\[
f(g_1 \circ g_2) = f(g_1) f(g_2).
\]

We use $\HtoH{m}$ notation for a set of all unitary transformations on $m$ qubits.

\smallskip

Let us choose a set of automorphisms $\KK$ from group of all automorphisms $\Aut(G)$: 
\begin{equation}\label{eqn:auto-set}
k_i \in \KK \subseteq \Aut(G), \qquad 1 \le i \le T.
\end{equation}

We will use notation $k\{g\}$ for image of $g$ under automorphism $k$.

\medskip

We generalize notion introduced in \cite{AV09} as follows. We call set $\KG$ of elements of chosen $\KK$ ``good'' set if
for each non-unit group element $g$: 
\begin{equation}\label{good-set}
\forall g \in G, g \neq e : \frac{1}{|\KG|^2} \left| \sum_{k \in \KG} \brz{f(k\{g\})} \right|^2 < \epsilon
\end{equation}

Let us also require that for each group element:
\begin{equation}\label{eqn:mean-auto}
\forall g \in G : \Mean{k \in \KK}{\brz{f(k\{g\})}} = 0
\end{equation}

In section \ref{sec:good-automorphisms} we will show that this requirement involves existence of ``good'' set $\KG$ of
elements of $\KK$. Also we will show that this set can be chosen randomly with high
probability 

In the remaining part of this section we will consider that ``good'' subset $\KG = \{k_1,\ldots,k_t\}$ is constructed
and its elements are some automorphisms $k_j \in \KK, 1 \le j \le t$.

\bigskip


Let us define quantum hash function as follows.
\begin{defn}\label{defn:hash-function}
  Quantum hash function based on classical hash function $h$ mapping
  $X^n$ to group $G$, ``good'' set of automorphisms $K = \{k_0,
  \ldots, k_{t-1}\}$ 
  and homomorphism $f$ to space $\HtoH{m}$:
\[
\ket{\Psi_{h,G,K,f,m,\ket{\Psi_0}}(x)} = \frac{1}{\sqrt t} \sum_{j=0}^{t-1} \bigg( \ket{j} \otimes f\big( k_j\{h(x)\} \big) \ket{\psi_0} \bigg).
\]
\end{defn}

We need homomorphism and automorphisms here, because we want to preserve group structure. It will allow us easily
compute hash of string concatenation and invert quantum hash function: e.g. to compute hash of string concatenation one
need to compute hash of first string and feed it as initial state to second string’s hash computation:

\[
\ket{\Psi_{h,G,K,f,m,\ket{\Psi_0}}(u \cdot v)} = \ket{\Psi_{h_1,G,K,f,m,\ket{\Psi_1}}(v)},
\]
where $h$ can be represented as $h(u \cdot v) = h(u) \circ h(v)$, and
\[
\ket{\Psi_1} = \ket{\Psi_{h,G,K,f,m,\ket{\Psi_0}}(u)}.
\]

To reverse quantum hash function one need to negate $h(x)$.

In the rest of the article we will omit $\ket{\Psi_0}$ parameter if its value is clear from the context.

\medskip

Let us consider square of scalar product of quantum hash function
values on different inputs:
\[
\left| \braket{\Psi_{h,G,K,f,m}(x)}{\Psi_{h,G,K,f,m}(x')} \right|^2 =
\]
\[
= \left| \frac{1}{t} \sum_{j=0}^{t-1} \bigg( \braket j j 
\braaket{\psi_0}{f^\dagger(k_j\{h(x)\})f(k_j\{h(x')\})}{\psi_0} \bigg) \right|^2 = 
\]
\[
 = \left| \frac{1}{t} \sum_{j=0}^{t-1} \brz{f(k_j\{h^{-1}(x) \circ h(x')\})} \right|^2
\]

If hash function $h$ has no collision, and $h(x) \neq h(x')$, product $h(x') \circ
h(x)^{-1}$ will be equal to some element $g \neq e$ of group $G$, and
by definition of ``good'' subset $\KG$ square of scalar product will
be equal to:
\[
\left| \frac{1}{t} \sum_{j=0}^{t-1} \brz{f(k_j\{h^{-1}(x) \circ h(x')\})} \right|^2 < \epsilon.
\]
Otherwise, square of scalar product equals to one:
\[
\left| \frac{1}{t} \sum_{j=0}^{t-1} \brz{f(k_j\{h^{-1}(x) \circ h(x')\})} \right|^2
= \left| \frac{1}{t}\sum_{j=0}^{t-1} \brz{f(k_j\{e\})} \right|^2 =
\]
\[
= \left| \frac{1}{t}\sum_{j=0}^{t-1} \brz{f(e)} \right|^2 = \brz{f(e)} = \braket{\psi_0}{\psi_0}
= 1
\]



\section{Example: $\ZZ_q$}
\label{sec:zq}

Fingerprinting technique suggested in \cite{AV09} can be considered as
special case of described scheme. In other words,
\begin{lemma}\label{thm:zz-q-hash}
  There exists quantum hash function for $\ZZ_q$ group, some set of
  automorphisms $\KG$ and homomorphism into $\HH{m}$ space.
\end{lemma}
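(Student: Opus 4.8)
The plan is to realize the Ablayev--Vasiliev fingerprint as an instance of Definition~\ref{defn:hash-function}: concretely one must name the data $(h,\,G=\ZZ_q,\,\KG,\,f,\,m,\,\ket{\psi_0})$ and verify~\eqref{good-set}. Take $m=1$ and $\ket{\psi_0}=\ket0$, and let $f\colon\ZZ_q\to\HtoH1$ send $a$ to the rotation by angle $2\pi a/q$, so that $f(a)\ket0=\cos\frac{2\pi a}{q}\ket0+\sin\frac{2\pi a}{q}\ket1$. This is a homomorphism because plane rotations compose additively and the angle only matters modulo $2\pi$, so $f$ factors through $\ZZ_q$. For the classical part one may take $h=\mathrm{id}$ on $G=\ZZ_q$, or, to hash $n$-bit strings, fix a prime $q\in[2^n,2^{n+1}]$ (Bertrand) and let $h$ be the binary-value embedding, which is collision-free; assume from now on that $q$ is prime.

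Since $\Aut(\ZZ_q)\cong\ZZ_q^{*}$, with the unit $r$ acting as $g\mapsto rg\bmod q$, put $\KK=\ZZ_q^{*}$. Then $\brz{f(k\{g\})}=\bra0 f(rg)\ket0=\cos\frac{2\pi rg}{q}$, so for a candidate good set $\KG=\{r_1,\dots,r_t\}\subseteq\ZZ_q^{*}$ the requirement~\eqref{good-set} becomes the elementary statement
\[
\forall g\in\ZZ_q\setminus\{0\}:\qquad \frac{1}{t^{2}}\Bigl|\sum_{j=1}^{t}\cos\tfrac{2\pi r_j g}{q}\Bigr|^{2}<\epsilon .
\]
The collapse of the inner product to $1$ when $g=e$ has already been checked in Section~\ref{sec:def}, so it only remains to produce a $\KG$ making this display hold.

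To do this I would use the probabilistic method. Draw $r_1,\dots,r_t$ independently and uniformly from $\ZZ_q^{*}$; for a fixed $g\neq0$ the map $r\mapsto rg$ permutes $\ZZ_q^{*}$, so the variables $X_j:=\cos\frac{2\pi r_j g}{q}$ are i.i.d., bounded in $[-1,1]$, with common mean $\mu_q:=\frac{1}{q-1}\sum_{a\in\ZZ_q^{*}}\cos\frac{2\pi a}{q}=-\frac{1}{q-1}$ (using $\sum_{a\in\ZZ_q}\cos\frac{2\pi a}{q}=0$). Hoeffding's inequality gives $\Prob\bigl[\,|\tfrac1t\sum_j X_j-\mu_q|\ge\delta\,\bigr]\le 2e^{-t\delta^{2}/2}$; choosing $\delta$ with $(|\mu_q|+\delta)^{2}<\epsilon$ — possible as soon as $1/(q-1)^{2}<\epsilon$, in particular for the $q\ge 2^{n}$ used in applications — and then $t=O(\delta^{-2}\log q)$, a union bound over the $q-1$ nonzero $g$ leaves positive probability that the displayed bound holds for all $g$ at once. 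Any such outcome is the required $\KG$, and this proves the lemma.

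The only genuine subtlety, and the step I expect to be the main obstacle, is that $\KK$ is forced to consist of \emph{automorphisms}, so one averages over $\ZZ_q^{*}$ rather than over all of $\ZZ_q$; this is what produces the nonzero bias $\mu_q=-1/(q-1)$ and hence the (harmless) restriction that $q$ not be tiny relative to $1/\epsilon$. If one prefers to kill the bias exactly and then appeal to the general existence result of Section~\ref{sec:good-automorphisms} instead of reproving it here, replace $f$ by the reducible representation $f(a)=\mathrm{diag}(1,e^{2\pi i a/q})$ on $\mathbb{C}^{2}$ together with $\ket{\psi_0}=\tfrac{1}{\sqrt q}\ket0+\sqrt{\tfrac{q-1}{q}}\ket1$; then $\Mean{k\in\ZZ_q^{*}}{\brz{f(k\{g\})}}=\tfrac1q+\tfrac{q-1}{q}\cdot\bigl(-\tfrac{1}{q-1}\bigr)=0$, so~\eqref{eqn:mean-auto} holds and the machinery of Section~\ref{sec:good-automorphisms} yields $\KG$ with no further number theory.
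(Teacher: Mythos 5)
Your construction coincides with the paper's at its core: the paper likewise takes $G=\ZZ_q$, $h(x)=x\bmod q$, $f(g)$ the single-qubit rotation by angle $2\pi g/q$, and $\KK=\ZZ_q^\times$ acting by multiplication, then asserts in one line that condition~(\ref{eqn:mean-auto}) holds and leaves the existence of $\KG$ to Theorem~\ref{thm:good-auto}. Where you genuinely depart from the paper is that you actually compute the average and find it is \emph{not} zero: for prime $q$ and $g\neq 0$ one has $\Mean{k\in\ZZ_q^\times}{\cos\tfrac{2\pi kg}{q}}=-\tfrac{1}{q-1}$ (for composite $q$ the sum is a Ramanujan sum, likewise typically nonzero), so the paper's verification of~(\ref{eqn:mean-auto}) is incorrect as written. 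Your proposal supplies two legitimate repairs: a direct Hoeffding-plus-union-bound argument that tolerates the $O(1/q)$ bias (at the cost of essentially re-proving Theorem~\ref{thm:good-auto} and of the mild hypothesis $1/(q-1)^2<\epsilon$), and a modified pair $f(a)=\mathrm{diag}(1,e^{2\pi i a/q})$, $\ket{\psi_0}=\tfrac{1}{\sqrt q}\ket 0+\sqrt{\tfrac{q-1}{q}}\ket 1$, for which the mean vanishes exactly so that Theorem~\ref{thm:good-auto} applies verbatim. Both are sound, and the second is the cleaner fit with the paper's architecture. Two minor remarks: you restrict to prime $q$, which the paper does not (though primality is what keeps the bias at exactly $-1/(q-1)$, so the restriction is well motivated and harmless for the existence claim); and condition~(\ref{eqn:mean-auto}) as stated quantifies over all $g\in G$ including $g=e$, where the mean is necessarily $1$, so it must be read as ranging over $g\neq e$ in either treatment.
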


We will use $\ZZ_q$ as group $G$ and $x \mod q$ as hash function $h(x)$.

Required homomorphism $f(g)$ of group $G$ into space $\HH{1}$ is qubit
rotation around Y axis on $\frac{2\pi g}{q}$ angle. It is
homomorphism, because product of two rotations on angle $\frac{2\pi
  g}{q}$ and on angle $\frac{2\pi g'}{q}$ is rotation on angle
$\frac{2\pi (g+g')}{q}$, where sum is modulo $q$, i.e. in $\ZZ_q$ group.

The group of automorphisms of $\ZZ_q$ group is $\ZZ_q^\times$ group:
\[
\Aut(\ZZ_q) = \ZZ_q^\times.
\] 
So, we choose $\KK \subseteq \Aut(G)$ to be a set of multiplications to $\ZZ_q^\times$ 
elements. It is easy to show that condition (\ref{eqn:mean-auto}) holds:
\[
\forall g \in \ZZ_q : \Mean{k \in \ZZ_q^\times}{\Big| \exp \big\{ \frac{2\pi k g}{q} \big\} 
\ket{0} \Big|} = 0
\]

\section{Example: $G_1 \times G_2$ group}
\label{sec:group-prod}

Elements of group $G_1 \times G_2$ are pairs of corresponding group elements. Group operation 
is defined component-wise:
\[
(g_1, g_2) \circ (h_1, h_2) = (g_1 \circ_1 h_1, g_2 \circ_2 h_2)
\]

Unit element is pair $(e_1, e_2)$ of corresponding group units.

\bigskip

Let $f_1$ and $f_2$ be homomorphisms from $G_1$ to $\HH{m_1}$ and from $G_2$ to $\HH{m_2}$,
correspondingly. Let $\KK_1 \subseteq \Aut(G_1)$ and $\KK_2 \subseteq \Aut(G_2)$ be corresponding
automorphisms that satisfy condition \ref{eqn:mean-auto}.

Let us define homomorphism $f$ from $G = G_1 \times G_2$ to $\HH{(m_1+m_2)}$ as follows:
\[
f((g_1, g_2)) = f_1(g_1) \otimes f_2(g_2).
\]

Let us choose automorphism set $\KK \subseteq \Aut(G)$ as follows:
\[
\KK = \left\{ (g_1, g_2) \mapsto (k_1\{g_1\}, k_2\{g_2\}) : k_1 \in \KK_1, k_2 \in \KK_2 \right\}.
\]

Let us show that condition (\ref{eqn:mean-auto}) is satisfied for this set, if it is satisfied for
$\KK_1$ and $\KK_2$:
\begin{gather*}
\Mean{k \in \KK}{|f(k\{g\})\ket{0}|} = \Mean{(k_1,k_2) \in \KK_1\times\KK_2}{f_1(k_1\{g_1\})\ket{0} \otimes
  f_2(k_2\{g_2\})\ket{0}} = \\
= \frac{1}{|\KK_1|} \sum_{k_1 \in \KK_1} \Bigg( \frac{1}{|\KK_2|} \sum_{k_2 \in \KK_2} f_1(k_1\{g_1\})\ket{0} \otimes
f_2(k_2\{g_2\})\ket{0} \Bigg) = \\
= \frac{1}{|\KK_1|} \sum_{k_1 \in \KK_1} f_1(k_1\{g_1\})\ket{0} \cdot \frac{1}{|\KK_2|} \sum_{k_2 \in \KK_2}
f_2(k_2\{g_2\})\ket{0} = \\
= \Mean{k_1 \in \KK_1}{f_1(k_1\{g_1\})\ket{0}} \cdot \Mean{k_2 \in \KK_2}{f_2(k_2\{g_2\})\ket{0}} = 0
\end{gather*}

Thus, holds
\begin{lemma}\label{thm:group-mult-hash}
  If there exists quantum hash functions for $G_1$, $G_2$ groups in $\HH{m_1}$ and $\HH{m_2}$ spaces
  with $f_1$ and $f_2$ homomorphisms and $\KK_1$ and $\KK_2$ automorphism sets satisfying
  \ref{eqn:mean-auto}, correspondingly, we can define quantum hash function for $G_1 \times G_2$ in
  $\HH{(m_1+m_2)}$ space.
\end{lemma}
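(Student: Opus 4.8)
The homomorphism $f$ and the automorphism set $\KK$ have already been written down above, together with the verification that $f$ is a homomorphism and that $\KK$ satisfies~(\ref{eqn:mean-auto}); by Definition~\ref{defn:hash-function} it therefore remains only to produce a ``good'' subset $\KG\subseteq\KK$, i.e.\ one meeting~(\ref{good-set}) for some $\epsilon<1$. The plan is to assemble it from good subsets of the two factors. First I would pick, by hypothesis, subsets $\KG^{(1)}\subseteq\KK_1$ and $\KG^{(2)}\subseteq\KK_2$ that are good for $G_1$ and $G_2$ with parameters $\epsilon_1$ and $\epsilon_2$, take as initial state the product vector $\ket{\psi_0}=\ket{\psi_0^{(1)}}\otimes\ket{\psi_0^{(2)}}$, and set $\KG=\big\{(g_1,g_2)\mapsto(k_1\{g_1\},k_2\{g_2\}):k_1\in\KG^{(1)},\,k_2\in\KG^{(2)}\big\}\subseteq\KK$, a set of size $|\KG^{(1)}|\cdot|\KG^{(2)}|$.

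The next step is the observation that the whole left-hand side of~(\ref{good-set}) factorizes over the two coordinates. Since $f((g_1,g_2))=f_1(g_1)\otimes f_2(g_2)$ and $\ket{\psi_0}$ is a product state, for $k=(k_1,k_2)$ and $g=(g_1,g_2)$ one has $\brz{f(k\{g\})}=\braaket{\psi_0^{(1)}}{f_1(k_1\{g_1\})}{\psi_0^{(1)}}\cdot\braaket{\psi_0^{(2)}}{f_2(k_2\{g_2\})}{\psi_0^{(2)}}$. Summing over $\KG=\KG^{(1)}\times\KG^{(2)}$, the double sum splits as a product of two independent single sums, so after dividing by $|\KG|^2=|\KG^{(1)}|^2|\KG^{(2)}|^2$ the quantity in~(\ref{good-set}) for $g$ equals the product of the analogous quantities for $g_1$ with respect to $\KG^{(1)}$ and for $g_2$ with respect to $\KG^{(2)}$.

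Finally I would bound this product with a short case analysis. As $g\neq(e_1,e_2)$, at least one coordinate, say $g_i$, is non-unit, and since $\KG^{(i)}$ is good the $i$-th factor is $<\epsilon_i$. If the other coordinate is also non-unit, the remaining factor is $<\epsilon_{3-i}$; if it equals the unit, then $f_{3-i}$ sends it to the identity operator, every amplitude in that factor equals $\braket{\psi_0^{(3-i)}}{\psi_0^{(3-i)}}=1$, and the factor equals exactly $1$. Either way the product is $<\max(\epsilon_1,\epsilon_2)<1$, so $\KG$ is good with $\epsilon=\max(\epsilon_1,\epsilon_2)$, and Definition~\ref{defn:hash-function} then gives the quantum hash function for $G_1\times G_2$ in $\HH{(m_1+m_2)}$.

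The one place that needs care is exactly this last step: one might hope for the cleaner product bound $\epsilon_1\epsilon_2$, but that fails as soon as a coordinate of $g$ is trivial, which forces the parameter to degrade to $\max(\epsilon_1,\epsilon_2)$. As a shortcut one could instead avoid building $\KG$ by hand and invoke the general existence result of Section~\ref{sec:good-automorphisms}, since~(\ref{eqn:mean-auto}) has already been checked for $\KK$; the explicit product construction above is, however, more informative about the resulting $\epsilon$.
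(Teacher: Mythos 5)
Your proof is correct, and its core construction --- the product homomorphism $f((g_1,g_2))=f_1(g_1)\otimes f_2(g_2)$ and the componentwise product automorphism set --- is exactly the paper's. The difference lies in how much of the argument is actually carried out. The paper's proof consists solely of the verification that the product set $\KK$ inherits condition~(\ref{eqn:mean-auto}) from $\KK_1$ and $\KK_2$, and then declares the lemma proved, implicitly deferring the existence of the ``good'' subset $\KG$ required by Definition~\ref{defn:hash-function} to the probabilistic argument of Section~\ref{sec:good-automorphisms}. You do both: you note that shortcut, but you also construct $\KG$ explicitly as $\KG^{(1)}\times\KG^{(2)}$, use the factorization of $\brz{f(k\{g\})}$ over the tensor product (with a product initial state) to split the normalized sum into a product of the two single-factor quantities, and handle the case where one coordinate of $g$ is the unit, where the corresponding factor equals exactly $1$. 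That extra work buys something the paper does not provide: an explicit good set with a concrete parameter $\max(\epsilon_1,\epsilon_2)$, with no appeal to the probabilistic existence theorem and hence no degradation of constants coming from the Azuma bound. Your remark that the na\"ive bound $\epsilon_1\epsilon_2$ fails precisely because a coordinate of $g$ may be trivial is correct and is a point the paper never addresses.
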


\section{Example: arbitrary abelian groups}
\label{sec:abelian-group}

\begin{thm}\label{thm:abelian}
For arbitrary abelian group $G$ there exists quantum hash function with some automorphism set $\KG$
and homomorphism $f$.
\end{thm}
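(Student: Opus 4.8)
The plan is to use the structure theorem for finite abelian groups together with the two lemmas already proved. Recall that every finite abelian group $G$ decomposes as a direct product of cyclic groups,
\[
G \cong \ZZ_{q_1} \times \ZZ_{q_2} \times \cdots \times \ZZ_{q_r},
\]
for suitable prime powers (or arbitrary moduli) $q_1, \ldots, q_r$. Lemma~\ref{thm:zz-q-hash} supplies, for each factor $\ZZ_{q_i}$, a homomorphism $f_i$ into $\HH{1}$ (rotation about the $Y$ axis by $\frac{2\pi g}{q_i}$) and an automorphism set $\KK_i = \ZZ_{q_i}^\times$ satisfying condition~(\ref{eqn:mean-auto}). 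Lemma~\ref{thm:group-mult-hash} shows that a quantum hash function for a direct product $G_1 \times G_2$ can be built from quantum hash functions for $G_1$ and $G_2$, with the homomorphism $f = f_1 \otimes f_2$, the automorphism set the product $\KK_1 \times \KK_2$, and condition~(\ref{eqn:mean-auto}) preserved.

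First I would invoke the classification theorem to fix an isomorphism $\varphi: G \to \ZZ_{q_1} \times \cdots \times \ZZ_{q_r}$. Since an isomorphism of groups carries homomorphisms, automorphisms, and the averaging condition~(\ref{eqn:mean-auto}) across unchanged, it suffices to construct the quantum hash function for the product group on the right-hand side. Then I would apply Lemma~\ref{thm:group-mult-hash} inductively on $r$: the base case $r=1$ is exactly Lemma~\ref{thm:zz-q-hash}, and the inductive step writes $\ZZ_{q_1} \times \cdots \times \ZZ_{q_r} = (\ZZ_{q_1} \times \cdots \times \ZZ_{q_{r-1}}) \times \ZZ_{q_r}$ and combines the hash function for the first $r-1$ factors (available by induction) with the one for $\ZZ_{q_r}$ (available from Lemma~\ref{thm:zz-q-hash}). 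The resulting homomorphism is $f = f_1 \otimes \cdots \otimes f_r$ into $\HH{r}$, the automorphism set is $\KK = \KK_1 \times \cdots \times \KK_r$, and condition~(\ref{eqn:mean-auto}) holds by the repeated application of the computation displayed before Lemma~\ref{thm:group-mult-hash}.

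Finally, once $\KK$ and $f$ are in hand and~(\ref{eqn:mean-auto}) is verified, the existence of a ``good'' subset $\KG$ satisfying~(\ref{good-set}) follows from the results promised in section~\ref{sec:good-automorphisms}, which derive a good set (with high probability, by random sampling) from any automorphism set meeting the mean-zero requirement. Composing this $\KG$ with $f$ and the classical hash function $h(x) = \varphi(\text{some fixed reduction of }x)$ — for instance $h$ induced by coordinatewise reduction modulo the $q_i$ — yields a quantum hash function for $G$ in the sense of Definition~\ref{defn:hash-function}, completing the proof.

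The main obstacle, such as it is, is largely bookkeeping: one must check that isomorphisms transport all the relevant structure (in particular that $\Mean{k \in \KK}{\cdot} = 0$ is isomorphism-invariant, which is immediate since an isomorphism merely relabels group elements and conjugates the automorphism group) and that the tensor-product homomorphism and product automorphism set behave associatively under the iterated application of Lemma~\ref{thm:group-mult-hash}. There is no deep new idea required beyond the structure theorem and the two lemmas; the only point demanding a little care is ensuring that condition~(\ref{eqn:mean-auto}), rather than the stronger good-set condition, is what propagates through the product construction — which is precisely why section~\ref{sec:good-automorphisms} is invoked only at the very end.
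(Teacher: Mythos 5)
Your proposal follows exactly the paper's own argument: decompose $G$ into cyclic factors via the structure theorem, apply Lemma~\ref{thm:zz-q-hash} to each factor, and combine them with Lemma~\ref{thm:group-mult-hash} (iterated over the factors), invoking section~\ref{sec:good-automorphisms} for the existence of the ``good'' subset $\KG$. The extra care you take with transporting structure along the isomorphism and with the induction on the number of factors is sound bookkeeping that the paper leaves implicit.
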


We can decompose every abelian group $G$ as follows:
\[
G = \ZZ_{p_1^{\sigma_1}} \otimes \cdots \otimes \ZZ_{p_t^{\sigma_t}},
\]
thus we can apply lemma \ref{thm:zz-q-hash} to define quantum hash function in each of $\ZZ_{p_1^{\sigma_1}}$, \ldots,
$\ZZ_{p_t^{\sigma_t}}$ groups, and then apply lemma \ref{thm:group-mult-hash} to compose these hash functions to hash
function for $G$.


\section{``Good'' automorphism subsets}\label{sec:good-automorphisms}

In this section we will consider that homomorphism $f: G \to \HtoH{m}$ and automorphism set $\KK$
are chosen:
\[
k_i \in \KK \subseteq \Aut(G), \qquad 1 \le i \le T,
\]
such that for each group element holds:
\[
\forall g \in G : \Mean{k \in \KK}{\brz{f(k\{g\})}} = 0
\]

Then holds
\begin{thm}\label{thm:good-auto}
  For random set $K$ of elements of set $\KK$ (each element is chosen uniformly at random from $\KK$) the probability of
  $K$ being ``bad'' for some $g$ does not exceed $1/|G|$:
\[
\Prob\Bigg( \frac{1}{|\KG|^2} \left| \sum_{k \in \KG} \brz{f(k\{g\})} \right|^2 \ge \epsilon \Bigg) \le \frac{1}{|G|}
\]
\end{thm}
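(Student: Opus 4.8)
The plan is to recognize the quantity inside the probability as the squared modulus of an average of $|\KG|$ independent, identically distributed, mean-zero complex random variables, and then apply the second-moment method followed by a union bound over the group. First I would fix a non-unit $g \in G$ and, for an automorphism $k$ drawn uniformly from $\KK$, set $X_k = \brz{f(k\{g\})}$. Since $f(k\{g\})$ is a unitary operator and $\ket{\psi_0}$ a unit vector, Cauchy--Schwarz gives $|X_k| \le 1$, hence $\mathbf{E}\big[|X_k|^2\big] \le 1$; and the standing hypothesis (\ref{eqn:mean-auto}) says precisely that $\mathbf{E}[X_k] = 0$. Writing $\KG = \{k_1, \dots, k_s\}$ with $s = |\KG|$, the assumption that each element of $K$ is drawn uniformly and independently from $\KK$ makes $X_{k_1}, \dots, X_{k_s}$ i.i.d.\ copies of $X_k$.

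Next I would compute the second moment of the normalized sum $S = \frac1s \sum_{j=1}^s X_{k_j}$. Expanding $\mathbf{E}\big[|S|^2\big] = \frac{1}{s^2}\sum_{j,l}\mathbf{E}\big[X_{k_j}\overline{X_{k_l}}\big]$, every off-diagonal term with $j \neq l$ factors by independence as $\mathbf{E}[X_{k_j}]\,\overline{\mathbf{E}[X_{k_l}]} = 0$, so only the $s$ diagonal terms survive and $\mathbf{E}\big[|S|^2\big] = \frac{1}{s^2}\sum_{j=1}^s \mathbf{E}\big[|X_{k_j}|^2\big] \le \frac1s$. Markov's inequality applied to the nonnegative random variable $|S|^2$ then gives $\Prob\big(|S|^2 \ge \epsilon\big) \le \frac{1}{s\epsilon}$, which is $\le 1/|G|$ as soon as $|\KG| \ge |G|/\epsilon$ (the size at which the good set should be taken); this is the claimed bound. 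To finish the existence statement promised in Section~\ref{sec:def}, I would take a union bound over the $|G|-1$ non-unit elements: the random $K$ is bad for at least one $g$ with probability at most $(|G|-1)/|G| < 1$, so a good set $\KG$ exists, and taking the sample slightly larger pushes this failure probability below any desired threshold, recovering the high-probability claim.

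The only delicate point, and the one I would be careful about, is the complex-valued nature of $X_k$: the variance computation must be written with complex conjugates, $\mathbf{E}[X_{k_j}\overline{X_{k_l}}]$, rather than treating the $X_{k_j}$ as real, and it relies on genuine independence of the samples (sampling with replacement), not merely on the $X$'s being pairwise uncorrelated across a random subset --- this is what makes the cross terms vanish. Everything else is a routine second-moment estimate, and no finer concentration inequality (Hoeffding, Chernoff) is needed, since the target failure probability $1/|G|$ is only polynomially small.
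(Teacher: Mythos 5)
Your argument is sound as a proof of the literal inequality, but it takes a genuinely different route from the paper, and the difference has a real quantitative cost. The paper treats the partial sums $Y_t = \sum_{i \le t} X_i$ as a martingale with bounded differences $|Y_t - Y_{t-1}| = |X_t| \le 1$ and applies Azuma's inequality, obtaining an exponential tail $\Prob\big( \big|\sum_i X_i\big| > \sqrt{\epsilon}\,|K| \big) \le 2\exp\{-\epsilon |K|/2\}$, which drops below $1/|G|$ already for $|K| = O\big(\tfrac{1}{\epsilon}\ln|G|\big)$. Your second-moment computation is correct --- the complex-conjugate bookkeeping and pairwise independence of samples drawn with replacement are exactly what make the cross terms vanish, and $|X_k| \le 1$ does follow from unitarity --- but Markov applied to $|S|^2$ only yields $\Prob(|S|^2 \ge \epsilon) \le 1/(|\KG|\epsilon)$, which forces $|\KG| \ge |G|/\epsilon$. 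Since the hash state of Definition~\ref{defn:hash-function} carries a register of $\lceil \log_2 t \rceil$ qubits indexing the automorphisms, a good set of size $|G|/\epsilon$ makes the quantum hash as long as a classical description of the group element, so the construction loses its compression property; the logarithmic bound on $|\KG|$ is the entire point of the theorem in this application. Your closing remark that no finer concentration inequality is needed is therefore exactly backwards: Hoeffding/Azuma is not needed to reach failure probability $1/|G|$ for a fixed $g$, but it is needed to reach it with $O(\log|G|)$ rather than $O(|G|)$ automorphisms. Both proofs conclude with the same union bound over the $|G|-1$ non-unit elements. (Incidentally, your derivation silently corrects a typo in the paper, whose exponent should read $\epsilon|K|/2$ rather than $\epsilon|K|^2/2$; and the paper's per-element bound is really $2/|G|$ unless one takes $|K| \ge \tfrac{2}{\epsilon}\ln(2|G|)$.)
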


\begin{proof}
Let us consider random variables
\[
X_i = \brz{f(k_i\{g\})},
\]
where $\{ k_1, k_2, \ldots, k_t \} = K$, and
\[
Y_t = \sum_{i=1}^t X_i.
\]

Let us show that sequence $Y_0 = 0, Y_1, Y_2, \ldots, Y_{|K|}$ is a martingale (cf.~\cite{McDiarmid89}).

Let us show that expectation value of $Y_t$ exists and is finite.
\begin{equation}\label{eqn:mean-yt-finite}
\Mean{}{Y_t} = \sum_{i=1}^t \Mean{}{X_i} = \sum_{i=1}^t \Mean{}{\brz{f(k_i\{g\})}} = 0
\end{equation}

Let us show that conditional expectation of $Y_t$ given $Y_{t-1}, \ldots, Y_1, Y_0$ is equal to $Y_{t-1}$:
\begin{equation}\label{eqn:cond-mean-yt}
\Mean{}{Y_t | Y_{t-1}, \ldots, Y_0} = \frac{1}{|K|} \sum_{i=1}^{|K|} \big( Y_{t-1} +
\brz{f(k_i\{g\})} \big) = Y_{t-1} + \Mean{}{X_i} = Y_{t-1}
\end{equation}

Because (\ref{eqn:mean-yt-finite}) and (\ref{eqn:cond-mean-yt}) hold, $Y_t$ is a martingale.

Let us estimate the difference of $Y_t$ and $Y_{t-1}$:
\[
|Y_t - Y_{t-1}| = \Big| \sum_{i=1}^t X_i - \sum_{i=1}^{t-1} X_i \Big| = | X_t | \le 1 
\]

Let us apply Azuma inequality:
\begin{equation}\label{eqn:yt-azuma}
\Prob\big( |Y_{|K|} - Y_0| > \lambda \big) = \Prob\Big( \big| \sum_{i=1}^{|K|} X_i \big| > \lambda
\Big) 
\le 2 \exp \left\{ - \frac{\lambda^2}{2|K|} \right\}
\end{equation}

Let
\[
\lambda = \sqrt{\epsilon} |K|.
\]

Then Azuma inequality (\ref{eqn:yt-azuma}) will take the form:
\[
\Prob\Big( \big| \sum_{i=1}^{|K|} X_i \big| > \sqrt{\epsilon} |K| \Big) \le 
2 \exp \left\{ - \frac{\epsilon |K|^2}{2} \right\} \le \frac{1}{|G|}
\]
where $|K| \ge \frac{2}{\epsilon} \ln |G|$.

This inequality means that set $K$ is not ``good'' for {\em some} $g$.

Thus, probability of set $K$ not being ``good'' {\em at least for some} non-unit $g$ does not exceed
$(|G|-1) / |G|$.

So, ``good'' automorphism set exists with probability of $1/|G|$.
\end{proof}

\section{On dimension of subspaces and group in quantum hash function}
\label{sec:restrictions}

\begin{defn}\label{defn:reflection}
  A complex reflection is non-trivial element that fix a complex hyperplane in space pointwise. A
  $p$--fold reflection matrix has characteristic roots 1 (repeated $n-1$ times) and $\theta$, a
  primitive $p$--th root of unity (cf. \cite{ShephardTodd53}).
\end{defn}

\begin{defn}\label{defn:uggr}
  Unitary group generated by reflections (u.g.g.r.) --- any finite group acting on a
  finite-dimensional complex vector space that is generated by complex reflections.
\end{defn}

The unitary groups generated by reflections were classified by Shephard and Todd in~\cite{ShephardTodd54}. 

Let us have quantum hash function:
\[
\ket{\Psi_{h,G,K,f,m}(x)}
\]

Because of Shephard and Todd result group $G$ and dimension $m$ cannot be arbitrary: group must be
subgroup of product of u.g.g.r., and dimension cannot be less than minimal product of dimensions of
corresponding u.g.g.r.

In other words, holds
\begin{thm}\label{thm:hash-uggr}
  In arbitrary quantum hash function $\ket{\Psi_{h,G,K,f,m}(x)}$ group $G$ is subgroup of some
  product of u.g.g.r. in $\HH{m}$; and vice versa, dimension $m$ cannot be less than dimension of
  space in which such product of u.g.g.r. can exist.
\end{thm}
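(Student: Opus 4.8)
The plan is to extract, from the data of a quantum hash function, a concrete action of $G$ on $\HH{m}$ by unitaries, and then analyze which unitaries can actually occur. Recall that $f: G \to \HtoH{m}$ is a homomorphism, so the image $f(G)$ is a subgroup of the unitary group of $\HH{m}$. The key point of the ``good'' condition is that for every non-unit $g$ the averaged overlap $\frac{1}{t}\sum_j \brz{f(k_j\{g\})}$ has modulus strictly below $1$; since each individual term $\brz{f(k_j\{g\})}$ is a diagonal matrix element of a unitary with respect to the fixed unit vector $\ket{\psi_0}$, and hence has modulus at most $1$, strict inequality for the average forces that $f(k_j\{g\})\ket{\psi_0} \neq \ket{\psi_0}$ for at least one $j$, i.e.\ $f(k_j\{g\})$ is not the identity. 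Because each $k_j$ is an automorphism, $k_j\{g\} \neq e$, so the upshot is: $f$ is injective on the subgroup generated by all the $k_j\{g\}$ (in the abelian case, on all of $G$), and $f(G)$ is therefore a faithful finite unitary representation of $G$ on $\HH{m}$.

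Next I would bring in the Shephard--Todd machinery. The relevant structural fact is that a finite unitary group, decomposed according to its action on the ambient space, splits (up to the appropriate tensor/direct-sum bookkeeping) into irreducible pieces, and the finite irreducible unitary groups that are generated by reflections are exactly those classified in~\cite{ShephardTodd54}; more to the point, any finite unitary group on a complex space is contained in a product of u.g.g.r.'s acting on the corresponding tensor factors, because one can take the reflection group generated by $f(G)$ together with enough reflections to fill out each irreducible constituent. Concretely, decompose $\HH{m}$ under $f(G)$ into isotypic/irreducible components; on each component $f(G)$ acts as a finite unitary group, which embeds into the u.g.g.r.\ it generates; tensoring these embeddings over the components realizes $f(G)$ — hence $G$ itself, via the faithful $f$ — as a subgroup of a product of u.g.g.r.'s inside $\HH{m}$. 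This gives the first half of the statement.

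For the converse direction, the claim is a dimension lower bound: $m$ must be at least large enough that such a product of u.g.g.r.'s can act. Here I would argue that the minimal $m$ supporting a faithful quantum hash function for $G$ is exactly the minimal dimension of a faithful representation of $G$ that can be completed to a product of reflection groups, and that this is governed by the Shephard--Todd classification together with the constraint~\eqref{eqn:mean-auto} (which, as in the $\ZZ_q$ and product examples, is what lets us assemble the $\KK_i$'s and hence forces a tensor-product structure compatible with the u.g.g.r.\ factorization). One direction — that a product of u.g.g.r.'s of the right dimension does yield a quantum hash function — follows by the same reflection-rotation construction used for $\ZZ_q$ in Lemma~\ref{thm:zz-q-hash} applied factorwise and combined via Lemma~\ref{thm:group-mult-hash}.

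The main obstacle I expect is making the word ``product'' and the tensor-versus-direct-sum bookkeeping precise: the hash-function state lives in $\HH{m}$ with a genuine tensor decomposition into qubits, whereas the Shephard--Todd classification is about irreducible reflection groups, and gluing irreducibles into the qubit tensor structure (rather than a direct sum) is exactly where condition~\eqref{eqn:mean-auto} has to do work. I would handle this by first proving the cleaner statement ``$G \hookrightarrow$ a direct product of u.g.g.r.'s, each acting irreducibly on a subspace, and $m \ge$ the total dimension of the smallest such configuration,'' and only afterwards matching it to the qubit count $m$; the faithfulness argument from the ``good'' condition is the one genuinely new ingredient and the rest is an application of the classification.
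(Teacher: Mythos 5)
Your route is genuinely different from the paper's, and the difference is where the gap lives. The paper's argument is elementwise and spectral: since $G$ is finite, each $f(g_i)$ is a unitary of finite order $r$, hence diagonalizable as $V_i^\dagger D_i V_i$ with $r$-th roots of unity on the diagonal; the diagonal factor splits as a product of matrices $A_{ij}$ differing from the identity in a single eigenvalue, each $V_i^\dagger A_{ij} V_i$ is a complex reflection, so every $f(g_i)$ is a product of reflections and $f(G)$ sits inside the group those reflections generate. Your proposal never produces any reflections. Instead you pass to the isotypic decomposition of $\HH{m}$ under $f(G)$ and assert that one can take ``the reflection group generated by $f(G)$ together with enough reflections to fill out each irreducible constituent.'' That assertion is essentially the theorem itself, and it does not follow from the Shephard--Todd classification: adjoining reflections to a finite unitary group can generate an infinite group, and it is not true in general that an arbitrary finite subgroup of $U(n)$ lies inside a \emph{finite} reflection subgroup of $U(n)$ (in large rank the irreducible u.g.g.r.\ are only the imprimitive families $G(m,p,n)$ and the symmetric groups, which cannot contain an arbitrary primitive irreducible finite linear group of that rank). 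So the step ``embed each isotypic piece into the u.g.g.r.\ it generates'' fails as stated; the per-element factorization into reflections is the ingredient you are missing, and it is the actual content of the paper's proof (which, to be fair, is itself silent on why the generated reflection group is finite).

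Two further gaps. Your faithfulness argument only shows that for each $g \neq e$ some $f(k\{g\}) \neq I$; since $\ker f$ need not be invariant under the chosen automorphisms, this does not yield injectivity of $f$ on $G$, nor on the subgroup generated by the $k_j\{g\}$, and the parenthetical about the abelian case does not repair this. And the converse half of the statement --- the lower bound on $m$ --- remains a plan in your write-up (``I would argue that\ldots''); nothing is actually derived from condition~(\ref{eqn:mean-auto}). The paper's own proof also omits the converse, but a proof attempt should either supply that half or state explicitly that it is being deferred.
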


E.g., let we want to construct quantum hash function in which classical hash function maps input to
some permutation on $k$ points. Then it is impossible to find homomorphism into space with
dimension less than $k$.

\begin{proof}
  Let us prove that group $G$ must be subgroup of product of u.g.g.r.

  Let group $G$ consists of $g_1$, $g_2$, \ldots, $g_t$. Because group is finite, there is some $r$
  that is group order:
  \[
  g_i^r = e
  \]

  Because $f(g_i) = U_i$ is a unitary matrix, its characteristic roots are some $r$--th roots of unity:
  \[
  f(g_i)^r = f(g_i^r) = f(e) = E
  \]

  Thus each of $U_i$ can be represented in form:
  \[
  U_i = V_i^T A_{i0} A_{i1} A_{is_i} V_i,
  \]
  where $A_{ij}$ is some reflection matrix, and $V_i$ is some unitary matrix.

  All elements of form $V_i^T A_{ij} V_i$ are some reflections in unitary space and are elements of
  some u.g.g.r.
\end{proof}





\end{document}